\newtheorem{theorem}{Theorem}
\DeclarePairedDelimiter{\ceil}{\lceil}{\rceil}
\begin{document}

\title{Memory-minimal quantum generation of stochastic processes:\\spectral invariants of quantum hidden Markov models}

\author{Magdalini Zonnios}
\email{zonniosm@tcd.ie}
\affiliation{School of Physics, Trinity College Dublin, College Green, Dublin 2, D02 K8N4, Ireland}
\affiliation{Trinity Quantum Alliance, Unit 16, Trinity Technology and Enterprise Centre, Pearse Street, Dublin 2, D02 YN67, Ireland}

\author{Alec Boyd}
\email{alecboy@gmail.com}
\affiliation{Beyond Institute for Theoretical Science, San Francisco, CA, USA}
\affiliation{School of Physics, Trinity College Dublin, College Green, Dublin 2, D02 K8N4, Ireland}
\affiliation{Trinity Quantum Alliance, Unit 16, Trinity Technology and Enterprise Centre, Pearse Street, Dublin 2, D02 YN67, Ireland}

\author{Felix C. Binder}
\email{felix.binder@tcd.ie}
\affiliation{School of Physics, Trinity College Dublin, College Green, Dublin 2, D02 K8N4, Ireland}
\affiliation{Trinity Quantum Alliance, Unit 16, Trinity Technology and Enterprise Centre, Pearse Street, Dublin 2, D02 YN67, Ireland}

\date{\today}

\begin{abstract}
Stochastic processes abound in nature and accurately modeling them is essential across the quantitative sciences. They can be described by hidden Markov models (HMMs) or by their quantum extensions (QHMMs). These models explain and give rise to process outputs in terms of an observed system interacting with an unobserved memory. Although there are infinitely many models that can generate a given process, they can vary greatly in their memory requirements. 
It is therefore of great fundamental and practical importance to identify memory-minimal models. This task is complicated due to both the number of generating models, and the lack of invariant features that determine elements of the set. In general, it is forbiddingly difficult to ascertain that a given model is minimal. Addressing this challenge, we here identify spectral invariants of a process that can be calculated from any model that generates it. This allows us to determine strict bounds on the quantum generative complexity of the process -- its minimal memory requirement. We then show that the bound is raised quadratically when we restrict to classical operations. This is an entirely quantum-coherent effect, as we express precisely, using the resource theory of coherence. Finally, we demonstrate that the classical bound can be violated by quantum models.  
\end{abstract}

\maketitle
 
\textit{Introduction} --- Correlated stochastic processes are abundant in nature and can be understood in terms of interactions between an observed system and a hidden unobserved memory.  The memory stores information, carrying it between past and future to create temporal correlation in the observed data. There are many established ways to model processes with memory, including hidden Markov models~\cite{rabiner_tutorial_1989,upper_theory_1997}, finite state automata~\cite{rabin_probabilistic_1963,Gruska_potential_2015,Tian_experimental_2019}, and more recently so called quantum hidden Markov models (QHMMs)~\cite{giovannetti_quantum_2008,wiesner_computation_2008,oneill_hidden_2012,Clark_hidden_2015,monras_quantum_2016,Cholewa_QHMM_2017_2,SrinivasanLearning_2018,Aloisio_PRXQuantum.4.020310}, which show advantages over their classical counterparts~\cite{Gu_Quantum_2012,mahoney_occams_2016,riechers_minimized_2016,aghamohammadi_extreme_2017,lund_quantum_2017,Garner_adv_2017,elliott_superior_2018,Schuld_learning_2018,loomis_strong_2019,ghafari_dimensional_2019,elliott_memory-efficient_2019, ho_robust_2020,elliott_extreme_2020,blank_quantum-enhanced_2021,korzekwa_quantum_2021,elliott_quantum_2022,wu_implementing_2023,hangleiter_computational_2023,banchi_accuracy_2024,elliott_embedding_2024}. The memory requirements of these models vary with respect to the complexity of the underlying process. In general, for a given process, there are multiple consistent ways to accurately model the data, each of which constitutes a \emph{presentation} \cite{ellison2011information}.  The presentation and corresponding memory size of a process however are not unique: many distinct presentations can give rise to identical behaviors on the observable system~\cite{shalizi_computational_2001,ay_reductions_2005,monras_hidden_2012,riechers_minimized_2016,ruebeck_prediction_2018,jurgens_divergent_2021}. 

Minimal generative models are the privileged presentations that require the smallest memory. These models serve as a benchmark for understanding the generative complexity of the output process as they are irreducible (or incompressible) in the sense that they require the least amount of computational memory resources. The pursuit of minimal models is both of fundamental and practical interest~\cite{mahoney_occams_2016,wu_implementing_2023,yang_2024_dimensionreductionquantumsampling}, yet the challenge of identifying them or generically determining whether or not a model is minimal remains a difficult and open problem~\cite{lohr_predictive_2012,ruebeck_prediction_2018}. This difficulty stems from two key issues: Firstly, the full set of QHMMs over which to minimize is generally infinite. And secondly, the absence of known invariant features of the presentations that define equivalent process makes it hard to even specify the set of candidate QHMMs for optimization. Here we take a step towards resolving this challenge by identifying a spectral invariant property of the process which we can calculate from any any valid presentation of the process (whether minimal or not). This allows us to place strict bounds on the necessary memory to generate a stochastic process via repeated quantum operations.

We achieve this by first defining a QHMM in terms of the repeated application of a single quantum instrument which acts sequentially on memory over time in order to generate a process. Then, building on existing works which frame probability distributions in the language of tensor networks~\cite{stokes_probabilistic_2019,Glasser_Expressive_2019,li_connecting_2024,glasser_probabilistic_2020,Khavari_lower_2021,wall_tree-tensor-network_2021,lu_tensor_2021, adhikary_quantum_2021,christandl_resource_2023,harvey_sequence_2023,rieser_tensor_2023,Liao_Decohering_2023,li_2023_efficientquantummixedstatetomography,li_connecting_2024} we express a QHMM as a tensor chain, as has also been done for classical HMMs. Unlike classical HMMs however, QHMMs can exploit coherences to encode imperfect statistical distinguishability into imperfect state distinguishability via non-orthogonal memory states. This in turn introduces the possibility of reductions to the minimal memory required for complete sampling. Recent research has highlighted this advantage in using quantum unitary designs to do predictive modeling ~\cite{binder_practical_2018,Liu_optimal_2019,elliott_memory-efficient_2019,elliott_extreme_2020,elliott_memory_2021}. Here we show not only that the advantage persist in the broader task of generative modeling but we also provide a clear explanation as to the origin of the advantage in the two cases. 

The structure of this letter is as follows. We define a quantum hidden Markov model (QHMM) in terms of a quantum instrument $\mathcal{I}$ together with an initial state $\rho$. We then introduce the generative topological complexity, $c_Q(\overrightarrow{X})$ of a process $\overrightarrow{X}$ as the minimal memory required by any QHMM to generate a process. By expressing QHMMs as tensor chains, we show that the distinct, non-zero spectrum of their transfer operators serve as a process invariant, allowing us to derive a lower bound on $c_Q(\overrightarrow{X})$ which we present in Thm.~\eqref{theorem:quantum tc bound}. Next, we show in Thm.~\eqref{theorem:classical tc bound} that restricting to classical (i.e., Strictly Incoherent) quadratically raises the bound, highlighting that reductions in topological complexity beyond minimal classical HMMs require quantum coherence. Finally, we present an example of a process to prove the existence of quantum models with memory size below the classical bound.

\textit{Quantum simulation of stochastic processes via QHMMs}. ---
\begin{figure}
    \centering
\includegraphics[scale=0.8]{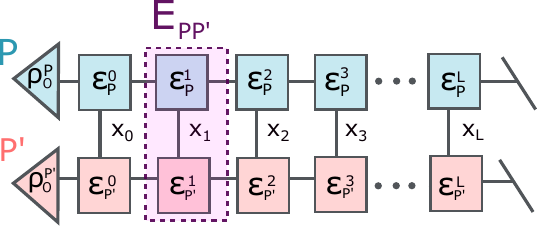}
    \caption{The transfer operator $\mathbb{E}_{PP'}=\sum_x\mathcal{E}_{P}\otimes\mathcal{E}_{P'}$ (highlighted in the purple box) is obtained by contracting the output of two tensor chains for processes $P$ and $P'$ which are obtained by a sequence of quantum instruments acting on memory.}
    \label{fig:TO}
\end{figure}
A stochastic process $\overrightarrow{X}$ is a sequence of random variables $\overrightarrow{X}:=X_0X_1\dots$. The outputs $x_i$ at each time $t_i$ are distributed according to $X_i$ and drawn from an output alphabet $\mathcal A$. Any such process can be simulated by the repeated application of a quantum instrument $\mathcal{I}$ acting on quantum memory $\rho_m$ as $\mathcal{I}[\rho_m]:=\sum_x\mathcal{E}^x[\rho_m]\otimes\dyad{x}$ such that, upon observing the outcome $x$, $\mathcal{I}$ updates a classical register to $\dyad{x}$ and the internal memory state $\rho_m$ via the completely positive (CP) map $\mathcal{E}^x$~\cite{vieira_temporal_2022}. The instrument $\mathcal E \equiv \{\mathcal{E}^x\}$ also induces a positive operator-valued measure (POVM) $\{M_x\}$ on the memory space with elements $M_x=\sum_\alpha K_\alpha^{(x)\dag} K_\alpha^{(x)}$ specified for any choice of Kraus operators $K_\alpha$ of the corresponding map $\mathcal{E}^x$, $\mathcal{E}^x[\rho]=\sum_\alpha K_{\alpha}^{(x)}\rho K_\alpha^{(x)\dag}$. $\mathcal{E}=\sum_x{\mathcal{E}^x}$ is completely positive and trace preserving (i.e., $\trace(\mathcal{E}[\rho])=\trace\left(\sum_x{\mathcal{E}^x}[\rho]\right)=\trace(\rho)$). The probability of obtaining outcome $x$ when the system is in the state $\rho$ follows the Born rule, $P(x)_\rho=\trace(M_{x}\rho)$ so that given an initial memory state $\rho_0$, the probability of measuring the sequence $x_{0:L}$ is
\begin{align}\label{eq:ProbAsSequenceOfInstruments}
    \Pr(x_{0:L})_{\rho_0} = \tr\left(\mathcal{E}^{x_L}\circ\dots\circ\mathcal{E}^{x_1}\circ\mathcal{E}^{x_0}[{\rho_0}]\right). 
\end{align}
We define a QHMM $R$ as the tuple of a quantum instrument $\mathcal{I}_R$ and an initial state $\rho_R$: $R=(\mathcal{I}_R,\rho_R)$ and denote the generated probability distribution as $R(x_{0:L})$. Thus, each model $R$ specifies a stochastic process. Conversely, a given stochastic process can be produced by many different QHMMs, as any QHMM $R$ which outputs sequences of $\overrightarrow{X}$ with the correct probabilities via Eq.~\eqref{eq:ProbAsSequenceOfInstruments}, is a generative model of $\overrightarrow{X}$. We denote the set of generating models for $\overrightarrow{X}$ as $\mathbf{R}_{\overrightarrow{X}}$.

\textit{Topological generative complexity of QHMMs}.~---~ 
The variety in valid QHMMs for a given process implies that the input space of the instruments may vary from model to model, with some requiring more, and others less memory in order to generate the same process faithfully.  Within $\mathbf{R}_{\overrightarrow{X}}$, we identify those that require the smallest amount of memory as topologically-minimal. The minimal memory required to generate a process -- its topological generative complexity -- has been defined and studied in the classical context~\cite{upper_theory_1997,lohr_predictive_2012,ruebeck_prediction_2018}. In the quantum domain the topological generative complexity of $\overrightarrow{X}$ becomes
\begin{align}\label{eq:TCgQDef}
        c_Q({\overrightarrow{X}}):=\min_{R \in \mathbf{R}_{\overrightarrow{X}}}\{\log[{\dim\left[\mathcal{H}_{R}^{in}\right]}]\},
\end{align}
where $\mathcal{H}_{R}^{in}$ is the Hilbert space upon which the instruments $\mathcal{I}^{R}$'s input density operators are defined. Thus, the measure $c_Q({\overrightarrow{X}})$ corresponds to the minimal memory size among all generating models that gives rise to $\overrightarrow{X}$. In general, the minimization in Eq.~\eqref{eq:TCgQDef} is highly non-trivial and the set of possible QHMMs for a given process is infinite. However, as we will show, it is possible to identify invariant features of the set, and thus place strict bounds on $c_Q(\overrightarrow{X})$. 

We present our main by introducing the transfer operator $\mathbb{E}$ (as shown in Fig.~\ref{fig:TO}) and its vectorised form $\mathbf{E}$. For any process $\overrightarrow{X}$, all models ${R}\in\mathbf{R}_{\overrightarrow{X}}$ which generate the process have transfer operators $\mathbf{E}_{RR}$ with an equivalent distinct spectrum. This key property significantly constrains the possible generating models, whether classical or quantum. By identifying this invariant feature, we directly establish place bounds on the required generating memory, as it depends on the size of the invariant spectrum. 

Before formally introducing the transfer operator let us note that what we mean when we say that two models $R$ and $Q$ generate the same process is that $R(x_{0:L})= Q(x_{0:L})~~\forall~x_{0:L},~L$. We note that if $R_L=Q_L~\forall~L$, then
\begin{align}\label{eq:innerProductRQ}
     \sum_{x_{0:L}}R(x_{0:L})^2 = \sum_{x_{0:L}}Q(x_{0:L})^2.
\end{align}
Then we introduce the transfer operator
\begin{align}\label{eq:transferOperator}
    \mathbb{E}_{AB} := \sum_x\mathcal{E}_{A}^x\otimes\mathcal{E}_{B}^x,
\end{align}
which acts on states $\rho_A\otimes\rho_B$. Using vectorised notation to express
\begin{align}\label{eq:VecEquivalence}
        \trace (\mathbb{E}_{AB}^{\circ L}\left[\rho_{A}\otimes\rho_{B}\right]) =     \langle\langle \mathds{1} | \mathbf{E}_{AB}^L|\rho_{AB}\rangle\rangle,
\end{align}
where $\mathbf{E}_{AB}$ is the vectorised (A-form) version of the transfer operator $\mathbb{E}_{AB}$, $|\rho\rangle\rangle$ is the vectorised form of $\rho$~\cite{milz_introduction_2017} and $|\mathds{1}\rangle\rangle$ is the vectorised identity operator on the output spaces of $\mathbf{E}_{AB}$, we re-express Eq.~\eqref{eq:innerProductRQ} as
\begin{align}\label{eq:VecEquivalenceVectorised}
    \langle\langle \mathds{1} | \mathbf{E}_{RR}^L|\rho_{RR}\rangle\rangle= \langle\langle \mathds{1} | \mathbf{E}_{QQ}^L|\rho_{QQ}\rangle\rangle.
\end{align}
By the condition in Eq.~\eqref{eq:innerProductRQ}, if two QHMMs $R$ and $Q$ generate the same process then Eq.~\eqref{eq:VecEquivalenceVectorised}
must hold for all~$L$.
\noindent\textit{Bounds on memory dimension of generative models}. --- In order to arrive at the central result of this letter we identify two invariant spectral features of the transfer operator of any QHMM which generate the same process. First, we introduce the set of distinct non-zero eigenvalues in the spectrum of $\mathbf{E}_{RR}$, $\Lambda_R=\{ \lambda_{R}~|~~ \exists\ket{v_R} 
\neq 0 : \mathbf{E}_{RR}\ket{v_Q} = \lambda \ket{v_R} \}$  (and likewise for $\Lambda_Q$ and $\mathbf{E}_{QQ}$).  
We also define corresponding coefficients  $\alpha_{\lambda}=\langle\langle\mathds{1} |\Pi_{\lambda}|\rho\rangle\rangle$, which relate eigenprojectors $\Pi_{\lambda}$ to their respective eigenvalues $\lambda: \mathbf{E}_{RR}=\sum_{\lambda \in \Lambda_{R}} \lambda \Pi_\lambda$.  This spectral decomposition exists as long as $\mathbf{E}_{RR}$ is diagonalisable~\footnote{Non-diagonalisable matrices are measure zero in the space of randomly matrices, representing a dense set of stochastic processes}.
\begin{theorem}\label{lemma:EquivalenceOfPQ}
If two quantum hidden Markov models $R=(\mathcal{I}^R,\rho_R)$ and $Q=(\mathcal{I}^Q,\rho_Q)$ generate the same stochastic process it necessarily holds that 
 \begin{align}\label{eq: lem 1,Eq1}
     \Lambda_{\overrightarrow{X}}:=\Lambda_R &= \Lambda_Q,
\end{align}
and that
\begin{align}\label{eq: lem 1,Eq2}
    \lambda_R=\lambda_Q \Rightarrow \alpha_{\lambda_R} = \alpha_{\lambda_Q}.
\end{align}
\end{theorem}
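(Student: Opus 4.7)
The plan is to reduce the statement to a linear-independence argument for exponential sequences. Since $R$ and $Q$ generate the same process, $R(x_{0:L})=Q(x_{0:L})$ for every finite sequence, which by Eq.~\eqref{eq:innerProductRQ} and the vectorised identity~\eqref{eq:VecEquivalenceVectorised} delivers the scalar constraint
\begin{equation}
\langle\langle\mathds{1}|\mathbf{E}_{RR}^L|\rho_{RR}\rangle\rangle=\langle\langle\mathds{1}|\mathbf{E}_{QQ}^L|\rho_{QQ}\rangle\rangle
\end{equation}
for every $L\in\mathbb{N}$. Substituting the spectral decompositions $\mathbf{E}_{RR}=\sum_{\lambda\in\Lambda_R}\lambda\,\Pi_\lambda^R$ and $\mathbf{E}_{QQ}=\sum_{\mu\in\Lambda_Q}\mu\,\Pi_\mu^Q$ (which exist by the diagonalisability hypothesis noted in the footnote) and using the definition of the $\alpha$-coefficients, the identity rearranges to
\begin{equation}
\sum_{\lambda\in\Lambda_R}\alpha^R_\lambda\,\lambda^L \;=\; \sum_{\mu\in\Lambda_Q}\alpha^Q_\mu\,\mu^L \qquad \forall\,L\geq 1 .
\end{equation}

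Next I would enumerate the union $\Lambda_R\cup\Lambda_Q=\{\nu_k\}_{k=1}^{N}$ with the $\nu_k$ pairwise distinct and non-zero, extend the $\alpha$-coefficients to this union by setting $\alpha^R_\nu=0$ when $\nu\notin\Lambda_R$ (and analogously for $Q$), and subtract the two exponential sums to obtain $\sum_{k=1}^{N}c_k\,\nu_k^L=0$ for every $L\geq 1$, where $c_k:=\alpha^R_{\nu_k}-\alpha^Q_{\nu_k}$. Evaluating at $L=1,\ldots,N$ yields a linear system whose coefficient matrix factorises as a Vandermonde matrix in the $\nu_k$ times $\mathrm{diag}(\nu_1,\ldots,\nu_N)$; the former is invertible because the $\nu_k$ are distinct, the latter because they are non-zero. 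Hence $c_k=0$ for every $k$, which is exactly both assertions of the theorem: eigenvalues lying in only one of the two spectra are forced to have vanishing $\alpha$ (i.e.\ they do not contribute and may be removed), while shared eigenvalues satisfy $\alpha^R_{\lambda}=\alpha^Q_{\lambda}$.

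The delicate point — and where I would be most careful — is the treatment of eigenvalues with vanishing coefficient. Such a $\lambda$ leaves no fingerprint on the observable trace $\langle\langle\mathds{1}|\mathbf{E}^L|\rho\rangle\rangle$, so it cannot be distinguished from its absence by any amount of process data; the cleanest reading of the theorem therefore restricts $\Lambda_R$ to those non-zero eigenvalues with $\alpha_\lambda\neq 0$, i.e.\ those simultaneously reachable from $|\rho_{RR}\rangle\rangle$ and observable through $\langle\langle\mathds{1}|$, in direct analogy with the reachable-and-observable subspace of a linear realisation. With this qualification the Vandermonde conclusion is sharp. If one wished to drop the diagonalisability assumption, the argument would still go through upon replacing $\lambda^L$ with generalised-eigenspace terms of the form $\binom{L}{j}\lambda^{L-j}$ and invoking the standard fact that polynomial-times-exponential sequences with distinct bases are linearly independent.
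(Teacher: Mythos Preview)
Your argument is correct and follows essentially the same route as the paper: reduce the process equivalence to the scalar identity $\sum_{\lambda\in\Lambda_R}\alpha^R_\lambda\lambda^L=\sum_{\mu\in\Lambda_Q}\alpha^Q_\mu\mu^L$ for all $L$, pass to the union of spectra, and conclude via the invertibility of the Vandermonde-times-diagonal system at $L=1,\dots,N$. Your explicit caveat about eigenvalues with vanishing $\alpha$-coefficient matches the qualification the paper itself records at the end of its appendix proof, and your remark on the non-diagonalisable extension goes slightly beyond what the paper attempts.
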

\begin{proof}
Expressing the vectorised transfer operators in their eigen-decompositions, $\mathbf{E}=\sum_\lambda \lambda \Pi_\lambda$, we can re-write Eqn.~\eqref{eq:VecEquivalenceVectorised} as
\begin{align}\label{eq:Lem1 proof eq1}
    \sum_{\lambda_R \in \Lambda_R} \lambda_R^L \alpha_{\lambda_R} =     \sum_{\lambda_Q \in \Lambda_Q} \lambda_Q^L \alpha_{\lambda_Q}, 
\end{align}
If we define $\Lambda_{RQ}:=\Lambda_R\cup \Lambda_Q$ then we may subtract the left from the right side of the expression to obtain
\begin{align}\label{eq:Lem1 eq1}
\sum_{\lambda_{RQ}\in\Lambda_{RQ}}\lambda_{RQ}^L\alpha_{\lambda_{RQ}}  = 0,
\end{align}
where $\alpha_{\lambda_{RQ}} := \alpha_{\lambda_R=\lambda_{RQ}} - \alpha_{\lambda_Q=\lambda_{RQ}}$, is the difference between the projectors corresponding to the eigenvalue $\lambda_{RQ}$. The solution to Eq.~\eqref{eq:Lem1 eq1} is given by considering the Vandermonde matrix in this expression to show that we must have $\alpha_{\lambda_{RQ}}=0$ for all $\lambda_{RQ}$ and consequently $\alpha_{\lambda_R}=\alpha_{\lambda_Q}$ for all $\lambda_{R},~\lambda_{Q}$. Additional details can be found in App~\ref{sec:ProofofLemma1}.
\end{proof}
Theorem~\eqref{lemma:EquivalenceOfPQ} shows that the distinct, non-zero spectrum is an invariant of the process ($\Lambda_{\overrightarrow{X}}$), irrespective of the QHMM which generates it. The freedom among generating models thus lies in the second part of Thm.~\eqref{lemma:EquivalenceOfPQ} which shows that the eigenprojectors corresponding to a given eigenvalue need not be equal for all models. Moreover since the set $\Lambda_{\overrightarrow{X}}$ is agnostic to the presence of degenerate eigenvalues, alternative models may still include multiple different eigenprojectors that correspond to the same eigenvalue. This highlights that the only possible operations that can be made to a QHMM and still have it generate the same process, are those that preserve the spectrum of the corresponding transfer operator. We note that the converse does not necessarily hold: there may exist alternative processes which have transfer operators with the same distinct spectrum, i.e., $\Lambda_{\overrightarrow{X}}=\Lambda_{\overrightarrow{X}'}$ for $\overrightarrow{X}\neq{\overrightarrow{X}'}$.
\\\\
Nevertheless, by fixing the number of unique eigenvalues required for generating a given process, we are able to set strict lower bounds on the dimensions of the required memory space. This leads to the first main result of our paper presented in Thm.~\eqref{theorem:quantum tc bound}.

\begin{theorem}\label{theorem:quantum tc bound}
        The generative topological complexity $c_Q(\overrightarrow{X})$ is bounded from below by the ceiling of the fourth root of the number of unique eigenvalues in the spectrum of any valid transfer operator of the process, i.e., 
        \begin{align}\label{eq:cQInequality}
            c_Q(\overrightarrow{X}) \geq \log{\ceil{{|\Lambda_{\overrightarrow{X}}|}^\frac{1}{4}}}
        \end{align}
where $\Lambda_{\overrightarrow{X}}$ can be determined via any QHMM that generates $\overrightarrow{X}$.
\end{theorem}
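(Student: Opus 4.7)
The plan is to reduce the claim to a simple dimension-count on the vectorised transfer operator, leveraging Theorem~\ref{lemma:EquivalenceOfPQ} to transfer that count to a process invariant.

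First I would fix an arbitrary generating model $R \in \mathbf{R}_{\overrightarrow{X}}$ with memory Hilbert space $\mathcal{H}_R^{in}$ of dimension $d := \dim[\mathcal{H}_R^{in}]$, so that $c_Q(\overrightarrow{X}) = \min_R \log d$. Each CP map $\mathcal{E}_R^x$ acts on operators on $\mathcal{H}_R^{in}$, a space whose vectorisation has dimension $d^2$; hence the vectorised map $\mathbf{E}_R^x$ is a $d^2 \times d^2$ matrix. The summands of the transfer operator, $\mathcal{E}_R^x \otimes \mathcal{E}_R^x$, act on operators on $\mathcal{H}_R^{in} \otimes \mathcal{H}_R^{in}$, whose vectorisation therefore has dimension $d^4$. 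Consequently $\mathbf{E}_{RR}$ is a $d^4 \times d^4$ matrix.

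Next I would invoke the elementary linear-algebraic fact that an $N \times N$ matrix admits at most $N$ distinct eigenvalues (counting the zero eigenvalue at most once). Applied to $\mathbf{E}_{RR}$, this gives the bound
\begin{align}\label{eq:spectralcount}
|\Lambda_R| \leq d^4.
\end{align}
By Theorem~\ref{lemma:EquivalenceOfPQ}, the set of distinct non-zero eigenvalues is a process invariant, $\Lambda_R = \Lambda_{\overrightarrow{X}}$, so that Eq.~\eqref{eq:spectralcount} in fact reads $|\Lambda_{\overrightarrow{X}}| \leq d^4$. Rearranging and using that $d$ is a positive integer gives $d \geq \lceil |\Lambda_{\overrightarrow{X}}|^{1/4}\rceil$.

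Finally, taking logarithms (monotone) and minimising the left-hand side over all $R \in \mathbf{R}_{\overrightarrow{X}}$ yields
\begin{align}
c_Q(\overrightarrow{X}) = \min_{R \in \mathbf{R}_{\overrightarrow{X}}} \log d \;\geq\; \log\ceil{|\Lambda_{\overrightarrow{X}}|^{1/4}},
\end{align}
which is precisely Eq.~\eqref{eq:cQInequality}. The argument requires no nontrivial estimates; the only subtlety worth emphasising is the dimensional accounting ($d \mapsto d^2$ under vectorisation, then a further squaring from the tensor-doubled structure $\mathcal{E}_R^x \otimes \mathcal{E}_R^x$ in the transfer operator), so the main thing to be careful about is that $\mathbf{E}_{RR}$ really lives on a $d^4$-dimensional space, not $d^2$.
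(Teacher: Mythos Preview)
Your proposal is correct and follows essentially the same argument as the paper: a dimension count showing that the vectorised transfer operator $\mathbf{E}_{RR}$ acts on a $d^4$-dimensional space, combined with Theorem~\ref{lemma:EquivalenceOfPQ} to identify $\Lambda_R$ with the process invariant $\Lambda_{\overrightarrow{X}}$, and then taking the fourth root, ceiling, and logarithm. The only cosmetic difference is that the paper works directly with the minimising model $R_m$ rather than an arbitrary $R$ followed by a minimisation step, but the content is identical.
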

\begin{proof}    
The QHMM $R_{m}$ which minimizes the right hand side of Eq.~\eqref{eq:TCgQDef} for a process $\overrightarrow{X}$ will have inputs defined on $\mathcal{L}(\mathcal{H}_{R_{m}}^{in})$ such that the corresponding vectorised transfer operator $\mathbf{E}_{R_{m}R_{m}}$ will have inputs on $\mathcal{H}_{R_{m}}^{in \otimes 4}$ with dimension $\dim(\mathcal{H}_{R_{m}}^{in \otimes 4}) = \dim(\mathcal{H}_{R_{m}}^{in})^4$. Then, since the topological complexity is $c_Q(\overrightarrow{X})=\log(\dim(\mathcal{H}_{R_{m}}^{in}))$ and by Theorem.~\eqref{lemma:EquivalenceOfPQ} the transfer operator for all valid generators of the process will have the same (distinct) spectrum $\Lambda_{\overrightarrow{X}}$, we know that $\dim(\mathcal{H}_{R_{m}}^{in})^4\geq |\Lambda_{\overrightarrow{X}}|$. Taking the fourth root of both sides and the ceiling (to ensure we achieve the next largest integer value), we arrive at Eq.~\eqref{eq:cQInequality}.  
\end{proof}
We have shown that the number of internal states required by any model which generates the stochastic process $\overrightarrow{X}$ is lower bounded by the spectrum $\Lambda_{\overrightarrow{X}}$ given by the transfer operator of any generator of $\overrightarrow{X}$.

\textit{Bounds on stochastic processes generated via strictly incoherent (classical) operations (SIOs)}.--If we focus on models generated solely through classical memory states and channels, the topological complexity increases. Specifically, the presence of coherences (relative to the the measurement basis) can offer a quadratic memory advantage. Classically generated processes can also be expressed as QHMMs where the input and output states of the model are incoherent (diagonal in the measurement basis) and the maps $\mathcal{E}^x$ in Eq.~\eqref{eq:ProbAsSequenceOfInstruments} are strictly incoherent operations (SIOs) with respect to the same basis~\cite{Plenio_coherence_2014, winter_operational_coherence_2016,Vedral_Coherence_2016}.  The operations reduce to the dynamics of an edge-emitting HMM, yielding a special case of $c_Q(\overrightarrow{X})$, the classical topological complexity, $c_C(\overrightarrow{X})$, which aligns with Eq.~\eqref{eq:TCgQDef} under the constraint of classical memory.
\begin{theorem}\label{theorem:classical tc bound}
        The classical topological complexity $c_C({\overrightarrow{X}})$ is bounded from below by the ceiling of the number of distinct, non-zero eigenvalues in the spectrum of any valid transfer operator of the process raised to $\frac{1}{2}$, i.e., 
        \begin{align}\label{eq:cCInequality}
            c_C(\overrightarrow{X})&\geq \log\ceil{|\Lambda_{\overrightarrow{X}}|^\frac{1}{2}}, \text{and}\\
            c_C(\overrightarrow{X})&\geq c_Q(\overrightarrow{X}).
        \end{align}
\end{theorem}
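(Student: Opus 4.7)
The plan is to handle the two inequalities separately. The bound $c_C(\overrightarrow{X})\geq c_Q(\overrightarrow{X})$ is essentially by definition: every classical (SIO-based) QHMM is a QHMM, so the set of classical generators is a subset of $\mathbf{R}_{\overrightarrow{X}}$ and minimizing Eq.~\eqref{eq:TCgQDef} over a smaller set can only increase the outcome.

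For the substantive square-root bound, the key observation is that for an SIO channel every Kraus operator has the form $K_\alpha^{(x)}=D_\alpha^{(x)}P_\alpha^{(x)}$, with $D_\alpha^{(x)}$ diagonal and $P_\alpha^{(x)}$ a partial permutation in the incoherent (measurement) basis. Hence each $\mathcal{E}^x$ preserves the set of incoherent states and is completely characterised by a sub-stochastic $d\times d$ matrix $T^x$ acting on the diagonal of $\rho$, where $d=\dim\mathcal{H}_R^{\text{in}}$. For a classical model the input state $\rho_R$ and the resolution $\mathds{1}$ appearing in Eq.~\eqref{eq:VecEquivalence} are also diagonal.

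From here I would show that the diagonal-diagonal subspace of $\mathcal{L}(\mathcal{H}_R\otimes\mathcal{H}_R)$ is invariant under $\mathbb{E}_{RR}$, has dimension $d^2$, and that the restriction of $\mathbf{E}_{RR}$ to this subspace is $\sum_x T^x\otimes T^x$. Because both $|\rho_{RR}\rangle\rangle$ and $\langle\langle\mathds{1}|$ live in this subspace, only the at most $d^2$ eigenvalues of the restriction contribute to the eigen-expansion used in Thm.~\ref{lemma:EquivalenceOfPQ}; any eigenvalue of $\mathbf{E}_{RR}$ supported on the complementary block has $\alpha_\lambda=0$ and drops out of the process invariant $\Lambda_{\overrightarrow{X}}$. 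Thus $|\Lambda_{\overrightarrow{X}}|\leq d^2$ for any classical generator with memory dimension $d$, and applying the same ceiling-root step as in the proof of Thm.~\ref{theorem:quantum tc bound} yields Eq.~\eqref{eq:cCInequality}. The hardest step to pin down is this last one: justifying that $\Lambda_{\overrightarrow{X}}$ is faithfully captured by the $d^2$-dimensional restricted transfer operator rather than the full $d^4$-dimensional one. This requires either reading Thm.~\ref{lemma:EquivalenceOfPQ} as constraining only the ``reachable'' spectrum (eigenvalues whose projectors have nonzero overlap with both the vectorised initial state and $|\mathds{1}\rangle\rangle$), or observing that any extraneous eigenvalues in the complementary block can be stripped off without altering the process, since they contribute nothing to Eq.~\eqref{eq:VecEquivalence}.
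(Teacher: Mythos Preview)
Your proposal is correct in spirit and reaches the same conclusion, but the route differs from the paper's in one instructive way. You argue via an invariant-subspace decomposition: the diagonal--diagonal block of $\mathbf{E}_{RR}$ has dimension $d^2$, the vectorised state $|\rho_{RR}\rangle\rangle$ and the functional $\langle\langle\mathds{1}|$ both live in it, and hence only eigenvalues supported on that block contribute to the expansion in Eq.~\eqref{eq:Lem1 proof eq1}. This is valid, but---as you yourself flag---it forces you to argue that eigenvalues in the complementary block, having $\alpha_\lambda=0$, drop out of the process invariant $\Lambda_{\overrightarrow{X}}$ even though the formal definition of $\Lambda_R$ includes all distinct nonzero eigenvalues of $\mathbf{E}_{RR}$.

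The paper avoids this subtlety entirely. Rather than working with a generic SIO, it takes the specific classical-HMM Kraus representation $K^{(x)}_{i\to j}=\sqrt{\Pr(x,j|i)}\,e^{\iota\phi}|j\rangle\langle i|$ (Appendix~\ref{app: classical HMMs and SIOs}) and computes the matrix elements of $\mathbf{E}_{R_CR_C}$ directly, obtaining $(\mathbf{E}_{R_C})_{ii'jj'}^{kk'll'}\propto\delta_{i,i'}\delta_{j,j'}\delta_{k,k'}\delta_{l,l'}$. For this choice of Kraus operators the map genuinely annihilates off-diagonal inputs, so the full $m^4\times m^4$ transfer operator has rank at most $m^2$; there simply are no extra eigenvalues to worry about, and $|\Lambda_{\overrightarrow{X}}|\leq m^2$ follows straight from the definition. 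This is exactly your second proposed resolution (``stripping off extraneous eigenvalues'') made concrete by passing to an equivalent model. Your invariant-subspace argument is a bit more general---it applies to any SIO presentation without first replacing it---at the cost of the reachability caveat; the paper's rank argument is cleaner but relies on the specific Kraus form. Both arrive at Eq.~\eqref{eq:cCInequality}, and your treatment of $c_C\geq c_Q$ via the subset inclusion is identical to the paper's.
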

\begin{proof} 
Consider the QHMM $R_C$ which minimizes Eq.~\eqref{eq:cQInequality} subject to the constraint that the model is composed of maps $\mathcal{E}^{(x)}_{R_C}$ with $\dim(\mathcal{H}^{(in)}_{R_C}) = m$. SIOs yield vectorized operators that are $m^4\times m^4$ matrices, but with at most $m^2\times m^2$ non-zero entries and $\rank(\mathbf{E}_{R_C,R_C})\leq m^2$. Thus the dimension of the transfer operator for any minimal classical QHMM satisfies $|\Lambda_{\overrightarrow{X}}| \leq m^2$. Moreover, any minimal QHMM (without the SIO constraint) cannot exceed the classical HMM's memory since this would contradict Thm.~\eqref{theorem:quantum tc bound}. See App.~\ref{app:proof of classical tc bound} for details.
\end{proof}

Theorem~\eqref{theorem:classical tc bound} shows that a process $\overrightarrow{X}$ with classical topological complexity $c_C(\overrightarrow{X})$, can have a lower quantum topological complexity $c_Q(\overrightarrow{X})<c_C(\overrightarrow{X})$ indicating a quantum memory advantage. This occurs if and only if the maps $\{\mathcal{E}^{(x)}_{R_Q}\}$ of the minimal QHMM for $\overrightarrow{X}$ create or utilize coherence. The next section provides a constructive example of this advantage. 
\begin{figure}
    \includegraphics[scale=0.7]{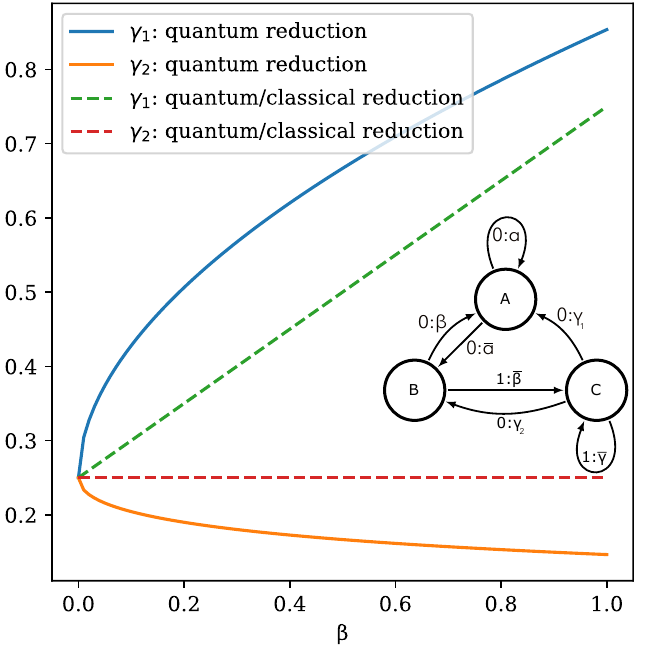}
    \caption{The inset shows a classically irreducible three-state, two parameter ($\alpha,\beta$) generative model that emits from an alphabet $\{0,1\}$. The model demonstrates a quantum advantage in the number of generative states for certain transition values. We define $\bar{\star}:=1-\star$, and set $\gamma_1=\left(\frac{\sqrt{\alpha}+\sqrt{\beta}}{\nu} \right)^2$, ${\gamma_2} =\frac{1}{\nu^2} {\bar{\alpha}}$, and $\bar{\gamma} = \frac{1}{\nu^2} {\bar{\beta}}$ where the normalisation constant $\nu$ is determined implicitly by ${\gamma_1}+{\gamma_2}+\bar{\gamma}=1$. We set $\alpha=0.5$ and show the curves of $\{\gamma_1,\gamma_2,\bar{\gamma}\}$ that lead to a reduction in the the quantum regime. We also obtain a classical reduction for the parameter values shown in the dashed lines.}
    \label{fig:exampleModel}
\end{figure}
\begin{theorem}\label{theorem: proof of existence}
    There exist processes $\overrightarrow{X}$ for which $c_Q(\overrightarrow{X})< c_C(\overrightarrow{X})$.
\end{theorem}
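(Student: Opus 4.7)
My plan is to prove Thm.~\eqref{theorem: proof of existence} by explicit construction, using the two-parameter model shown in Fig.~\ref{fig:exampleModel} as the witness. The strategy has two pieces: (i) fix parameter values $(\alpha,\beta)$ on the displayed curve for which the three-state classical HMM is minimal among classical presentations, and (ii) exhibit a \emph{two-dimensional} quantum HMM that reproduces the same output distribution. Together these establish $c_C(\overrightarrow{X}) \geq \log 3 > \log 2 \geq c_Q(\overrightarrow{X})$, proving the strict separation.

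For the classical lower bound, I would first compute the edge-emitting HMM transfer operator of the three-state model: for each $x\in\{0,1\}$ the CP map $\mathcal{E}^x$ is the SIO whose matrix elements are the transition weights labelled in Fig.~\ref{fig:exampleModel}. Vectorising and summing over $x$ yields the transfer operator $\mathbf{E}_{RR}$ of rank at most $9$, whose distinct non-zero spectrum $\Lambda_{\overrightarrow{X}}$ is computable in closed form in $(\alpha,\beta)$. Away from a measure-zero locus of algebraic degeneracies one finds $|\Lambda_{\overrightarrow{X}}|\geq 5$, whence Thm.~\eqref{theorem:classical tc bound} gives $c_C(\overrightarrow{X})\geq\log\lceil\sqrt{5}\,\rceil = \log 3$, ruling out any one- or two-state classical presentation.

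For the quantum upper bound, I would write down explicit Kraus operators $K^{(0)},K^{(1)}$ on a single qubit (so $\dim\mathcal{H}^{in}_R = 2$) together with an initial state $\rho_R$, such that the induced QHMM reproduces the output statistics of the classical three-state model. The form of the coefficients in the caption is a strong hint: the appearance of $(\sqrt{\alpha}+\sqrt{\beta})^2$ in $\gamma_1$ indicates Kraus operators that coherently sum \emph{amplitudes} where the classical model sums probabilities, effectively collapsing two distinguishable classical memory states into two non-orthogonal pure memory states of the qubit. The constraint $\gamma_1+\gamma_2+\bar\gamma = 1$ from the caption then corresponds precisely to the trace-preservation condition $\sum_x M_x = \mathds{1}$ for the qubit instrument.

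The main obstacle is the last step: verifying sequence-by-sequence equality of the two distributions for all lengths $L$, rather than merely matching the invariant spectrum (which by Thm.~\eqref{lemma:EquivalenceOfPQ} is necessary but not sufficient for process equivalence). I would discharge this by a standard observable-operator argument: two finite-state models that agree on all words up to length equal to the sum of their input dimensions (here at most $9+4=13$) necessarily agree on all words, since the rank of the Hankel matrix of word probabilities is bounded by the minimal presentation dimension. This reduces the infinite family of identities $R(x_{0:L}) = Q(x_{0:L})$ to a finite polynomial check in $(\alpha,\beta)$, which the parameterisation of Fig.~\ref{fig:exampleModel} is designed to satisfy precisely on the plotted curve, giving the required $c_Q(\overrightarrow{X})\leq \log 2 < \log 3 \leq c_C(\overrightarrow{X})$.
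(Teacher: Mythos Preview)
Your proposal is correct and follows essentially the same strategy as the paper: the same three-state witness of Fig.~\ref{fig:exampleModel}, the same appeal to Thm.~\eqref{theorem:classical tc bound} via the transfer-operator spectrum for the classical lower bound $c_C(\overrightarrow X)\ge\log 3$, and an explicit two-dimensional quantum model for the upper bound. The one tactical difference is that the paper constructs the qubit QHMM via the general recipe of App.~\ref{app: Quantum Circuit to Implement any Classical HMM}, whose memory states satisfy $\ket{\sigma_C}=\tfrac{1}{\nu}(\ket{\sigma_A}+\ket{\sigma_B})$ on the chosen parameter curve, so process equivalence holds \emph{by construction} and your Hankel/finite-length verification step is unnecessary.
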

\begin{proof}
\textit{Quantum generative advantage}. ---
When the maps $\{\mathcal{E}^{(x)}\}$ in Eq.~\eqref{eq:ProbAsSequenceOfInstruments} are SIOs and the initial state $\rho$ is diagonal in the same basis, the QHMM reduces to a classical HMM with the mappings: $\{\mathcal{E}^{(x)}\}\mapsto \{T^{(x)}\}$ and $\rho\mapsto \pi_\rho$, where $\{T^{(x)}\}$ are symbol-labeled transition matrices and $\pi_\rho$ is a vector describing the distribution over internal states given by the diagonal of $\rho$. The matrices $\{T^{(x)}\}$ update classical memory states from $r$ to $r'$ with probability $T^{(x)}_{r \rightarrow r'}$ upon emitting $x$ (see App.~\ref{app: classical HMMs and SIOs} for details). A more general QHMM generating $\overrightarrow{X}$ can also be described (non-uniquely) using classical HMMs by re-defining quantum memory states that, unlike their classical counterparts, are not necessarily orthogonal. This quantum coherence enables memory advantages, extensively studied in unifilar HMMs \cite{mahoney_occams_2016,binder_practical_2018,elliott_extreme_2020}.  App. \ref{app: Quantum Circuit to Implement any Classical HMM} provides a recipe for translating classical HMMs into QHMMs achieving memeory reduction when memeory states exhibit linear dependence. Now let us consider the 3-state model in Fig.~\ref{fig:exampleModel} with two emission symbols such that the set of transition matrices $\{T^{(0)},T^{(1})\}$ is 

\begin{align}
    T^{(0)} = \begin{bmatrix}
        \alpha & \beta & \gamma_1\\
        \bar{\alpha} & 0 & \gamma_2\\
        0 & 0 & 0
    \end{bmatrix}~;~~
 T^{(1)} = \begin{bmatrix}
        0 & 0 & 0 \\
        0 & 0 & 0 \\
        0 & \bar{\beta} & \bar{\gamma} \\
    \end{bmatrix} 
\end{align}
with $\bar{\star}:=1-\star_1-\star_2$. 
This HMM defines a stochastic process where word probabilities are calculated by applying the transition matrices sequentially to an initial state distribution (see App.~\ref{app: calculating probs from HMMs} for details).
For a model where $\gamma_1=\left(\frac{\sqrt{\alpha}+\sqrt{\beta}}{\nu} \right)^2$, ${\gamma_2} =\frac{1}{\nu^2} {\bar{\alpha}}$, and $\bar{\gamma} = \frac{1}{\nu^2} {\bar{\beta}}$, the quantum states generating the process are $\{\ket{\sigma_A},\ket{\sigma_B}, \ket{\sigma_C}\}$ where $\ket{\sigma_C} = \frac{1}{\nu}(\ket{\sigma_A}+\ket{\sigma_B})$. The normalisation constant $\nu$ satisfies ${\gamma_1}+{\gamma_2}+\bar{\gamma}=1$. Since $\ket{\sigma_C}$ is a linear combination of $\ket{\sigma_A}$ and $\ket{\sigma_B}$, the process requires only a two-dimensional quantum memory, i.e., $c_Q(\overrightarrow{X})=\log(2)$. However using Thm.~\eqref{theorem:classical tc bound} we can verify $c_C(\overrightarrow{X})=\log(3)$, except for trivial cases where $\alpha=0$ and/or $\beta=0$. Memory reduction can occur in two ways: reductions present in both classical and quantum cases, or purely quantum reductions due to coherence or phase enhancement. In the above model, the classical reduction line corresponds to $\gamma_1 = \frac{\alpha+\beta}{\nu}$, $\gamma_2 = \frac{1-\alpha}{\nu}$  and $\bar{\gamma} = 1- \gamma_1-\gamma_2$ normalization $\nu = \alpha+\beta+\bar{\beta}-\bar{\alpha} = 2$. 
\end{proof}
This example clearly demonstrates a quantum violation of the classical bound and a quantum advantage in the number of generating memory states.

\textit{Conclusions}. --- In this paper, we have derived new bounds on the memory required to sequentially generate a stochastic process, i.e., its quantum generative topological complexity. By analysing the spectral equivalence of transfer operators, we set firm lower bounds on the memory necessary to generate a process via any QHMM. Subsequently, we demonstrate that when restricting operations to be strictly incoherent, we recover classical hidden Markov models (HMMs). We find that their topological complexity bound is quadratically larger than that of their quantum counterpart. This shows that any violation of the classical topological complexity bound requires quantum coherence. We illustrate this with an example, showing that memory advantages are possible with QHMMs, where operations are not strictly incoherent. Such processes are highly non-unique, allowing the construction of arbitrarily many models by defining quantum memory states as linear combinations of each other. These models can be analyzed via Thms.~\eqref{theorem:quantum tc bound} and~\eqref{theorem:classical tc bound}. Our results demonstrate the spectral bound's utility in identifying generative advantages and highlight non-SIO operations as a resource for stochastic process generation. This parallels phase-based resource advantage in similar contexts~\cite{Liu_optimal_2019,Glasser_Expressive_2019}). 

\begin{acknowledgments}
\textbf{Acknowledgments}.--- The authors would like to thank Thomas Elliott, Paul Riechers for insightful comments. The research conducted in this publication was funded by the Irish Research Council under grant numbers IRCLA/2022/3922.
\end{acknowledgments}

\onecolumngrid


\clearpage
\appendix




\section{Proof of Theorem 1}\label{sec:ProofofLemma1}
If two models generate the same process it must hold that
\begin{align}\label{eq:rr=qqVecotrisedApp}
    \langle\langle \mathds{1} | \mathbf{E}_{RR}^L|\rho_{RR}\rangle\rangle= \langle\langle \mathds{1} | \mathbf{E}_{QQ}^L|\rho_{QQ}\rangle\rangle \forall L.
\end{align}
Assuming the matrix $\mathbf{E}_{RR}$ to be diagonalizable, it can be written in its eigendecomposition
\begin{align}
    \mathbf{E}_{RR}=\sum_{\lambda_R}\lambda_R\Pi_{\lambda_R}, 
\end{align}
where $\lambda_R$ and $\Pi_{\lambda_R}$ are the eigenvalues and corresponding eigenprojectors such that
\begin{align}
    \langle\langle \mathds{1} | \mathbf{E}_{RR}^L|\rho_{RR}\rangle\rangle&=  \langle\langle \mathds{1} |\sum_{\lambda_R}\lambda_R^L\Pi_{\lambda_R}|\rho_{RR}\rangle\rangle\\
&=\sum_{\lambda_R}\lambda_R^L\alpha_{\lambda_R}
\end{align}
where $\alpha_{\lambda_R}=\langle\langle\mathds{1} |\Pi_{\lambda_R}|\rho_{RR}\rangle\rangle$ and $\Pi_{\lambda_R}^L = \Pi_{\lambda_R}~\forall~L$. Then, the equality in Eq.~\eqref{eq:rr=qqVecotrisedApp} can equivalently be expressed as
\begin{align}\label{eqn:limProof1}
\sum_{\lambda_R}\lambda^L_R\alpha_{\lambda_R}&=\sum_{\lambda_{Q}}\lambda^L_{Q}\alpha_{\lambda_{Q}},
\end{align}
where $\lambda_R$ and $\lambda_{Q}$ are the eigenvalues of $\mathbf{E}_{RR}$ and $\mathbf{E}_{QQ}$ respectively. To proceed we introduce
\begin{align}
    \Lambda_R=\{ \lambda_{R}~|~~ \exists\ket{v_R} 
    \neq 0 : \mathbf{E}_{RR}\ket{v_R} = \lambda_R \ket{v_R} \}
\end{align}
and 
\begin{align}
    \Lambda_Q=\{ \lambda_{Q}~|~~ \exists\ket{v_Q} 
    \neq 0 : \mathbf{E}_{QQ}\ket{v_Q} = \lambda_Q \ket{v_Q} \},
\end{align}
the sets of distinct, non-zero spectrum of the transfer operators $\mathbf{E}_{RR}$ and $\mathbf{E}_{QQ}$ respectively. By construction, all eigenvalues in $\Lambda_Q$ and $\Lambda_R$ only appear once in the set. Next, subtracting the right hand side of Eq.~\eqref{eqn:limProof1} from the left hand side we have
\begin{align}
    \sum_{\lambda_R\in\Lambda_R}\lambda_R^L\alpha_{\lambda_R}- \sum_{\lambda_Q\in\Lambda_Q}\lambda_Q^L\alpha_{\lambda_Q} = 0.
\end{align}
By introducing $\Lambda_{RQ}:=\Lambda_R\cup\Lambda_Q$ this becomes
\begin{align}\label{eq:eigenvaluesTo0}
    \sum_{\lambda_{RQ}\in\Lambda_{RQ}}\lambda_{RQ}^L\alpha_{\lambda_{RQ}}  = 0,
\end{align}
where $\alpha_{\lambda_{RQ}} := \alpha_{\lambda_R=\lambda_{RQ}} - \alpha_{\lambda_Q=\lambda_{RQ}}$ are the differences between the projectors corresponding to eigenvalues $\lambda_{RQ}$. \\

\noindent 
We can re-write the system of equations in Eq.~\eqref{eq:eigenvaluesTo0} from $L=1$ to $L=d$ (where $d$ is the size of $\Lambda_{RQ}$) as, 
\begin{align}
V(\Lambda_{RQ})^TD(\Lambda_{RQ})\overrightarrow{\alpha}_{\lambda_{RQ}} = 0
\end{align}
where $D(\Lambda_{RQ})$ is the diagonal matrix
\begin{align}
    D(\Lambda_{RQ}) \equiv 
\begin{bmatrix}
& \lambda_{RQ}^{(1)}& 0 & 0 & \cdots & 0
\\ 
& 0 & \lambda_{RQ}^{(2)} & 0& \cdots & 0
\\
& 0& 0 & \lambda_{RQ}^{(3)} & \cdots & 0
\\
& \vdots & \vdots &\vdots & \cdots& \vdots
\\ 
& 0& 0 & 0 & \cdots & \lambda_{RQ}^{(d)}
\end{bmatrix},
\end{align}
 $V(\Lambda_{RQ})$ is the Vandermonde matrix~\cite{kailath_displacement_1995},
\begin{align}
    V(\Lambda_{RQ}) \equiv 
\begin{bmatrix}
& 1& \lambda_{RQ}^{(1)} & (\lambda_{RQ}^{(1)})^2 & \cdots & (\lambda_{RQ}^{(1)})^{d-1}
\\ 
& 1 & \lambda_{RQ}^{(2)} & (\lambda_{RQ}^{(2)})^2 & \cdots & (\lambda_{RQ}^{(2)})^{d-1}
\\
& 1& \lambda_{RQ}^{(3)} & (\lambda_{RQ}^{(3)})^2 & \cdots & (\lambda_{RQ}^{(3)})^{d-1}
\\
& \vdots & \vdots &\vdots & \cdots& \vdots
\\ 
& 1& \lambda_{RQ}^{(d)} & (\lambda_{RQ}^{(d)})^2 & \cdots & (\lambda_{RQ}^{(d)})^{d-1}
\end{bmatrix},
\end{align}
and the vector $\overrightarrow{\alpha}_{\lambda_{RQ}}$ is
\begin{align}
    \overrightarrow{\alpha}_{\lambda_{RQ}} = \begin{bmatrix}
    \alpha_{\lambda_{RQ}^{(1)}}\\
    \alpha_{\lambda_{RQ}^{(2)}}\\
    \alpha_{\lambda_{RQ}^{(3)}}\\
    \vdots\\
    \alpha_{\lambda_{RQ}^{(d)}}
    \end{bmatrix}.
\end{align}
Then the determinant of $V(\Lambda_{RQ})$ is given by the general form of the Vandermonde determinant
\begin{align}
\text{det}(V(\Lambda_{RQ}))= \prod_{1 \leq i < j \leq (d-1)} (\lambda^{(j)}_{RQ}-\lambda^{(i)}_{RQ}).
\end{align}
Since all elements of $\Lambda_{RQ}$ are distinct and nonzero, this product is also nonzero and thus Eq.~\eqref{eq:eigenvaluesTo0} implies $\alpha_{\lambda_{RQ}}=0$ $\forall~\lambda_{RQ}$. \\

\noindent So, we have three cases:
\begin{enumerate}
    \item The eigenvalue $\lambda_{RQ}$ is in both $\Lambda_R$ and $\Lambda_Q$, i.e, $\lambda_{RQ}\in\Lambda_R\cap\Lambda_Q$. Then,
    \begin{align}
    \alpha_{\lambda_{RQ}} = 0 \implies \alpha_{\lambda_R=\lambda_{RQ}} = \alpha_{\lambda_Q=\lambda_{RQ}}.    
    \end{align}
    In other words, the coefficients $\alpha_{\lambda}$ corresponding to the eigenvalue $\lambda_{RQ}$ are the same for both models $R$ and~$Q$.
    \item  The eigenvalue $\lambda_{RQ}$ is in $\Lambda_R$ but not $\Lambda_Q$, i.e., $\lambda_{RQ}\in\Lambda_R~~\text{but}~~\notin\Lambda_Q$. Then, 
    \begin{align}
    \alpha_{\lambda_{RQ}} &= 0 \\\implies \alpha_{\lambda_{RQ}} &= \alpha_{\lambda_R=\lambda_{RQ}}=0.
    \end{align}
    \item The eigenvalue $\lambda_{RQ}$ is in $\Lambda_Q$ but not $\Lambda_R$, i.e., $\lambda_{RQ}\in\Lambda_Q~~\text{but}~~\notin\Lambda_R$. Then, 
    \begin{align}
    \alpha_{\lambda_{RQ}} &= 0 \\\implies \alpha_{\lambda_{RQ}} &= \alpha_{\lambda_Q=\lambda_{RQ}}=0.    
    \end{align}
\end{enumerate}
Therefore, whenever a non-zero eigenvalue appears in only one set (either $\Lambda_{R}$ or $\Lambda_{Q}$), the corresponding coefficients must vanish. Thus we conclude that we must have $\Lambda_{PQ}=\Lambda_{P}=\Lambda_Q$, unless $\alpha_{\lambda_{R}}=0$ or $\alpha_{\lambda_{Q}}=0$ for any eigenvalue $\lambda_{R}\in\Lambda_{R}$ or $\lambda_Q\in\Lambda_Q$. 

\section{Calculating sequence probabilities from classical HMMs}\label{app: calculating probs from HMMs}
\noindent A classical HMM $M$ is defined by a set of sub-stochastic transition matrices $\{T^{(x)}\}$ and an initial distribution over internal states $\{r_i\}$, $\pi=(\Pr(r_1),P(r_2),\dots)^T$:
\begin{align}\label{eq: classical HMM}
M=(\{T^{(x)}\},\pi),   
\end{align}
where the elements of the transition matrices are
\begin{align}
    T_{i\to j}^{(x)} = \text{Pr}(x,i|j),
\end{align}
i.e., the conditional probability of the model emitting the symbol $x$ and transitioning from memory states $j$, given it was in state $i$. The probability of any sequence of outputs is then given by
\begin{align}
    \Pr(x_{0:L})_{\pi} =\overrightarrow{1} \prod_{i=0}^L T^{(x_i)}\pi,
\end{align}
where $\overrightarrow{1}$ is the row vector of all $1$s. Each transition matrix has elements
\begin{align}
(T^{(x)})_{ij} = \Pr(x,j|i)
\end{align}
describing the probability of transitioning from internal state $r_i$ to $r_j$ when the process output is $x$. The transition matrix for a word ${x_{0:L}}$ can also be expressed in terms of these matrices as
\begin{align}
    T^{(x_{0:L})} = \prod_{i=0}^LT^{(x_i)},
\end{align}
which acts on a distribution $\pi$ over orthogonal internal (memory) states $\{r_i\}$ to update to the conditional memory state $\pi_L^{(x_{0:L})} = T^{x_{0:L}}\pi$. Then, the probability of the word $(x_{0:L})$ given the initial state $\pi$ is given by, 
\begin{align}\label{eq:p(x0:l)ClassicalHMM}
    \Pr(x_{0:L})_{\pi} = \overrightarrow{1}T^{(x_{0:L})}\pi .
\end{align}

\section{Classical HMMs and strictly incoherent operations}\label{app: classical HMMs and SIOs}

We can identify classical HMMs within the larger class of QHMMs by selecting any computational memory basis $\mathcal{M}=\{|m \rangle \}$ and disallowing quantum coherence in that basis. In this case, the quantum instrument $\mathcal{E}$ can be viewed as a measurement in that basis, followed by a stochastic transformation on that memory state $|i\rangle$ that produces the output $x$ and maps the memory to state $|j \rangle $ with some probability $\Pr(x,j|i)$.  Thus, the action of elements of any such instrument can be expressed as
\begin{align}\label{eq:Kraus operators for SIO}
    \mathcal{E}^{x}(\rho) = \sum_{i,j}\Pr(x,j|i)|j \rangle\langle i |\rho  |i \rangle \langle j|.
\end{align}
Note from the previous appendix that this transformation explicitly encodes the dynamics of an edge-emitting HMM $(T^{(x)}_M)_{ij}= \Pr(x,j|i)$.  Expressing such an instrument in terms of Kraus operators requires their magnitude to be proportional to the square-root of the transition weights and allows for the choice of a complex phase as (irrelevant) degree of freedom:
\begin{align}
    K^{(x)}_{i \rightarrow j} = \sqrt{\Pr(x,j|i)}e^{\iota \phi (x,i,j)}|j \rangle \langle i|,
\end{align}
where $\iota$ is the imaginary unit. Quantum channels of this form are known as strictly incoherent operations (SIOs), which neither create nor use coherence w.r.t. $\mathcal M$~\cite{Plenio_coherence_2014, winter_operational_coherence_2016,Vedral_Coherence_2016}.

The transfer operator of an SIO in A-form is:
\begin{align}
    \mathbf{E}_{R_C}= \sum_{i,i',j,j',k,k',l,l'}|ii'jj' \rangle \langle kk'll'| \sum_{x}\langle k|\mathcal{E}^x(|i \rangle \langle i'|)|k'\rangle \langle l|\mathcal{E}^x(|j \rangle \langle j'|)|l'\rangle
\end{align}
Due to the structure of the SIO, the matrix representation of the transfer operator is highly sparse:
\begin{align}
(\mathbf{E}_{R_C})_{ii'jj'}^{kk'll'} & \equiv \sum_{x}\langle k|\mathcal{E}^x(|i \rangle \langle i'|)|k'\rangle \langle l|\mathcal{E}^x(|j \rangle \langle j'|)|l'\rangle
\\ & =\sum_{x}\Pr(x,k|i)\Pr(x,l|j) \delta_{i,i'}\delta_{j,j'}\delta_{k,k'}\delta_{l,l'}.
\end{align}
Here, $(E_{R_C})_{ijkl}=\sum_{x}\Pr(x,k|i)\Pr(x,l|j)$ are precisely the elements of the matrix representation of the classical transfer operator $E_{R_C}$ \cite{yang_Measures_2020}. Therefore, we can express the quantum transfer operator as a classical operator embedded in a larger space
\begin{align}
\mathbf{E}_{R_C}= \sum_{ijkl}(E_{R_C})_{ij}^{kl}|iijj\rangle \langle kkll|.
\end{align}
In this way, we directly connect the quantum transfer operator to the complexity of classical models.

\begin{figure}
    \centering
    \includegraphics[width=0.75\linewidth]{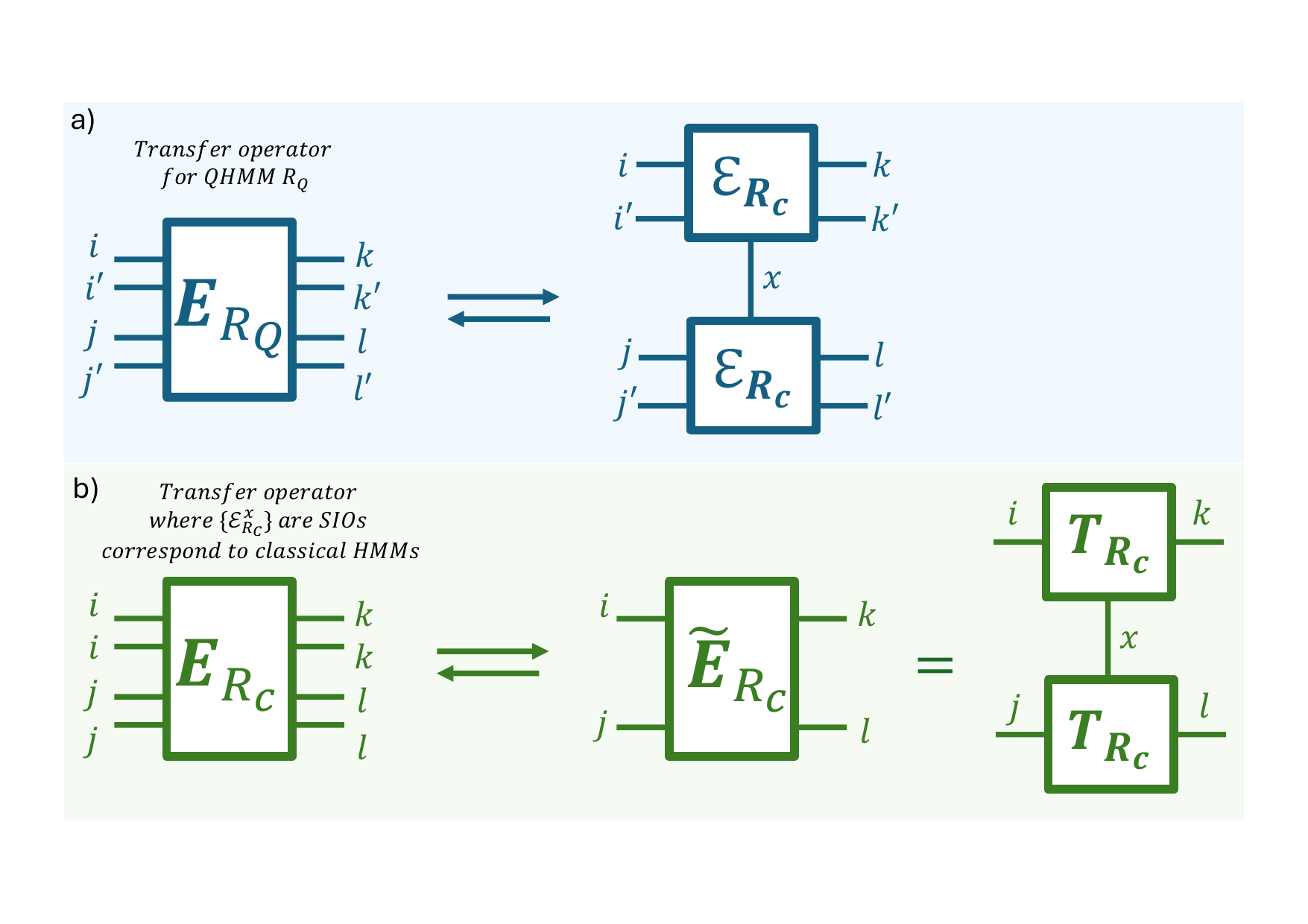}
    \caption{a) The transfer operator of a QHMM can be obtained as $\mathbb{E}_{R_Q} = \sum_{x}\mathcal{E}_{R_C}\otimes \mathcal{E}_{R_C}$ and is an eight index object, with four input ($i,i'j,j'$) and four output indices ($k,k'l,l'$). On the other hand we can also define a transfer operator for QHMMs with only strictly incoherent operations. Each set of doubles of input indices $\{i,i'\}$ and $\{j,j'\}$ correspond to the input indices of density operators and likewise for the output spaces $\{j,j'\}$ and $\{l,l'\}$ of the maps $\mathcal{E}_{R_Q}$. b) If instead, operations are restricted to be strictly incoherent, then the resulting transfer operator is constructed out of maps from single index objects to single index objects, i.e., from diagonal states to diagonal states. This can equivalently be expressed as $\mathbb{E}_{R_C} = \sum_{x}T^{(x)}\otimes T^{(x)}$, where $\{T^{(x)}\}$ are transition matrices of classical HMMs.}
    \label{fig:enter-label}
\end{figure}

\section{Proof of Theorems 2 and 3}\label{app:proof of classical tc bound}

Consider that minimal QHMMs $R_{min}$ of a process $\overrightarrow{X}$ act on an $n$-dimensional memory, i.e., $\dim(\mathcal{H}^{in}_{R_{min}})=n$ where the maps corresponding to a given model are $\{\mathcal{E}_{R_{min}}^{(x)}:\mathcal{L}(\mathcal{H}^{in}_{R_{min}})\mapsto \mathcal{L}(\mathcal{H}^{in}_{R_{min}}))\}$, where $\mathcal{L}(\mathcal{H})$ denotes linear operators on the Hilbert space $\mathcal{H}$. The minimality of $R_{min}$ implies that there is no alternative model $R$ of $\overrightarrow{X}$ with $\dim(\mathcal{H}^{in}_{R})<n$. The vectorised maps $\{\tilde{\mathcal{E}}_{R_{min}}^{(x)}\}$ are $n^2\times n^2$ matrices and the corresponding (vectorised) transfer operator $\mathbf{E}_{R_{min}}$ is an $n^4\times n^4$ matrix with distinct nonzero spectrum $\Lambda_{\overrightarrow{X}}$. In general $\mathbf{E}_{R_{min}}$ is the matrix
\begin{align}
    \mathbf{E}_{R_{min}}&= \sum_{i,i',j,j',k,k',l,l'}(\mathbf{E}_{R_{min}})_{ii'jj'}^{kk'll'}\dyad{ii'jj'}{kk'll'}
\end{align}
where $\{\ket{ii'jj'}\}$ and $\{\ket{kk'll'}\}$ are bases of the input and output spaces of $\mathbf{E}_{R_{min}}$ respectively and the size of the alphabet for each of the input space is precisely $\dim(\mathcal{H}^{(in)})$. The number of eigenvalues $\{\lambda_{R_{min}}\}$ of $\mathbf{E}_{R_{min}}$ will be equal to its dimensions and lower bounded by $|\Lambda_{\overrightarrow{X}}|$ (which only includes unique eigenvalues). In other words, since $c_Q(\overrightarrow{X})=\log n$ we have
\begin{align}
 &c_Q(\overrightarrow{X})^4=\log|\{\lambda_{R_{min}}\}|\geq \log|\Lambda_{\overrightarrow{X}}|   \\
 \implies&c_Q(\overrightarrow{X})=\log|\{\lambda_{R_{min}}\}|^{1/4} \geq \log\ceil{|\Lambda_{\overrightarrow{X}}|^{1/4}}
\end{align}
with equality holding if and only if $\{\lambda_{R_{min}}\}\equiv \Lambda_{\overrightarrow{X}}$, i.e., the multiplicity of all $\lambda_{R_{min}}$ is 1 and $\lambda_{R_{min}}\neq 0~\forall~\lambda_{R_{min}}$.\\\\
On the other hand, we can find an alternative minimal classical model $R_{Cmin}$ of the same process $\overrightarrow{X}$ which acts on an $m$-dimension classical memory, i.e., $\dim(\mathcal{H}^{in}_{R_{Cmin}})=m$ where the maps corresponding to the model are $\{\mathcal{E}_{R_{Cmin}}^{(x)}:\mathcal{L}(\mathcal{H}^{in}_{R_{Cmin}})\mapsto \mathcal{L}(\mathcal{H}^{in}_{R_{Cmin}})\}$. In order for the transformation of states to be fully classical we require that the input and output states are diagonal in the measurement basis and that transformations $\{\mathcal{E}_{R_{min}}^{(x)}\}$ are SIOs with respect to some computational basis. Under this construction, coherent terms on both input and output spaces of the transfer operator must be zero such that the vectorised transfer operator $\mathbf{E}_{R_{Cmin}}$ is
\begin{align}\label{eq:ERcmin}
    \mathbf{E}_{R_{Cmin}} &= \sum_{i,i',j,j',k,k',l,l'}(\mathbf{E}_{R_{Cmin}})_{ii'jj'}^{kk'll'}\dyad{ii'jj'}{kk'll'}\delta_{i,i'}\delta_{j,j'}\delta_{k,k'}\delta_{l,l'}\\
    &=\sum_{i,j,k,l}(E_{R_{Cmin}})_{ij}^{kl}\dyad{iijj}{kkll},
\end{align}
where $E_{R_{Cmin}}$ is a classical transfer operator. This means that if $\dim(\mathcal{H}^{(in)\otimes 2}_{R_{C{min}}})=m$ where only strictly incoherent operations act on the memory states, $\mathbf{E}_{R_{Cmin}}$ is an $m^4\times m^4$ matrix, but only has $m^2\times m^2$ non-zero terms. Importantly, the structure of the operator guarantees that the nullity (dimension of the kernel) of the operator is
\begin{align}
 \text{nullity}(\mathbf{E}_{R_{Cmin}})\geq m^4-m^2   
\end{align}
and conversely that the rank is 
\begin{align}
    \rank(\mathbf{E}_{R_{Cmin}})\leq m^2.    
\end{align}
Then the set $\Lambda_{\overrightarrow{X}}$ which by Thm.~\eqref{theorem:quantum tc bound} is the same for any valid model of the system, is related to the size of the minimal generating memory $\log m=c_C(\overrightarrow{X})$ as
\begin{align}
    &c_C(\overrightarrow{X})^2\geq \log[\rank({\mathbf{E}_{R_{Cmin}}})] \geq \log|\Lambda_{\overrightarrow{X}}|\\
    \implies&c_C(\overrightarrow{X})\geq \log\ceil{|\Lambda_{\overrightarrow{X}}|^{1/2}},
\end{align}
where equality holds if and only if the eigenvalues of the remaining spaces spanned by $\{\dyad{iijj}{kkll}\}$ are non-zero and distinct. Moreover, we note that the minimal memory for QHMMs must be lower-bounded by the minimal memory for classical HMMs. 

\section{QHMM corresponding to a classical HMM}
\label{app: Quantum Circuit to Implement any Classical HMM}
\noindent Given a classical HMM $(\{T^{(x)}\},\pi)$ as per Eq.~\eqref{eq: classical HMM}, which generates $\overrightarrow{X}$, we seek a QHMM $(\{\mathcal{E}^{x}\},\rho)$ that utilises coherence between memory states to achieve a memory advantage over its classical counterpart. In particular, consider the classical model which generates word sequences $x_{0:L}$ with probabilities given by Eq.~\eqref{eq:p(x0:l)ClassicalHMM}, 
\begin{align}
    \text{Pr}(x_{0:L})_{\pi} =\overrightarrow{1}^TT^{(x_{L})}T^{(x_{L-1})}\dots T^{(x_{0})}\pi.  
\end{align}
where the transition matrices have elements $T_{i\to j}^{(x)}=\text{Pr}(x,j|i)$, describing the probability of transitions from internal state $i$ to $j$, when the output $x$ is observed. Then, in order to construct the QHMM which executes the model, we require CP maps $\{\mathcal{E}^{(x)}\}$ satisfying 
\begin{align}\label{eq:prbabilties in terms of instruments}
    \text{Pr}(x_{0:L})_{\rho_m} &= \text{Pr}(x_{0:L})_{\pi} \\
    &= \text{tr}(\mathcal{E}^{x_L}\circ \mathcal{E}^{x_{L-1}}\circ \mathcal{E}^{x_0}[\rho_m])~\forall~x_{0:L}, 
\end{align}
where $\rho_m$ corresponds to the quantum initial memory which gives rise to the same statistics as those of the HMM initialized in the state $\pi$. We introduce the following (generally non-orthogonal) states
\begin{align}
|\psi_i \rangle := \sum_{j,x} e^{\iota\lambda(i,j,x)}\sqrt{T^{(x)}_{i \rightarrow j}}|{j}\rangle\!\!\ket{x}.
\label{eq:memory_states}
\end{align}
where $\{\ket{j}\}$ and $\{\ket{x}\}$ are orthonormal bases corresponding to the HMM's hidden states and to the alphabet, respectively.  $e^{\iota\lambda(i,j,x)}$ is a phase factor that does not affect measurement statistics. We define $\mathcal{H}_M:=\text{span}\{\ket{\psi_m}\}$ as the memory Hilbert space. 
Then, in order to generate the same process as Eq.~\eqref{eq:prbabilties in terms of instruments}, we enforce that the initial state encodes the given classical distribution $\pi$ as,
\begin{align}
    \rho_m  =\sum_{i} \pi_i|\psi_{i}\rangle\!\langle{\psi_{i}}|.
\end{align}
Note here that generally $\langle{\psi_i}|\psi_j\rangle\neq\delta_{ij}$. 
The CP maps $\{\mathcal{E}^{(x)}\}$ which generate the process are
\begin{align}\label{eq: update map}
    \mathcal{E}^{(x)}[\rho] &= \sum_{k}K^{(x)}_{k} \rho K^{(x)\dag}_{k}
\end{align}
where $\{K^{(x)}_{k}\}$ are the Kraus operators defined as
\begin{align}
    K^{(x)}_{k}:= |{\psi_k}\rangle\!\langle k|\!\langle x|
\end{align}  
and the maps $\{\mathcal{E}^{x}\}$ correspond to those required for Eq.~\eqref{eq:prbabilties in terms of instruments}. The map $\sum_x\mathcal{E}^{(x)}$ is CPTP. If $\rho=\rho_m$, then Eq.~\eqref{eq: update map} becomes
\begin{align}
    \mathcal{E}^{(x)}[\rho_m] &= \sum_{i,j}T^{(x)}_{i \rightarrow j}\pi_i\ket{\psi_j}\!\!\bra{\psi_j}\text{, where}\\
    \text{Pr}(x)_{\pi}&=\tr[\mathcal{E}^{(x)}[\rho_m]]=\overrightarrow{1}^TT^{(x)}\pi,
\end{align}
and likewise for repeated application of the instrument $\mathcal E$ and correspondingly longer strings $x_{0:L}$.\\

\noindent Lastly, we show that instrument $\mathcal{E}$ leaves $H_M$ invariant by applying an arbitrary Kraus operator $K_k^{(y)}$ to an arbitrary~$\ket{\psi_i}$, noting that $\{\ket{\psi_i}\}$ forms a basis of $\mathcal{H}_M$ by definition (non-orthogonal, and potentially overcomplete):
\begin{align}
    K_k^{(y)}\ket{\psi_i}=e^{\iota\lambda(i,k,y)}\sqrt{T^{(y)}_{i \rightarrow k}}\ket{\psi_k}\in \mathcal{H}_M
\end{align}
This means that Kraus operators $K_k^{(y)}$ of this QHMM only interact with the memory Hilbert space $\mathcal{H}_M$.   Thus, we define a new set of Kraus operations that act on the memory Hilbert space alone:
\begin{align}
\tilde K_k^{(y)}:=P_M K_k^{(y)} P_M,
\end{align}
where $P_M$ is a projector corresponding to $\mathcal{H}_M$. This new set of Kraus operators $\{\tilde K_j^{(x)}\}$ fulfills a completeness relation with respect to the state space corresponding to $\mathcal{H}_M$. That is, for any state $\rho_M$ acting on $\mathcal{H}_M$ the instrument $\tilde{\mathcal{E}}:=\{\tilde{\mathcal{E}}^{(x)}\}$ resulting from these Kraus operators is CPTP on the restricted space.  Moreover, the associate word probabilities produced by this instrument are identical to the original uncompressed map
\begin{align}
\text{tr}(\tilde{\mathcal{E}}^{x_L}\circ \tilde{\mathcal{E}}^{x_{L-1}}\circ \tilde{\mathcal{E}}^{x_0}[\rho_m])=\text{tr}(\mathcal{E}^{x_L}\circ \mathcal{E}^{x_{L-1}}\circ \mathcal{E}^{x_0}[\rho_m]).
\end{align}
Therefore, whenever the set of states $\{\ket{\psi_m}\}$ are linearly dependent, the QHMM thus constructed requires a smaller dimensional Hilbert space than the number of hidden states of the HMM from which it was constructed.\\

\noindent We can exploit the phase degrees of freedom in Eq.~\ref{eq:memory_states} to create such linear dependencies. How to do so optimally remains an open question, paralleling the predictive case~\cite{Liu_optimal_2019}, which we leave for future work. The existence of memory-reduced models is demonstrated by means of a minimalist example in the main text of this paper.
\end{document}